\newcommand{\prd}[1]{\Pi_{#1}\mkern1mu}
\newcommand{\sm}[1]{\Sigma \left(#1\right).\,}
\newcommand{\smsimple}[1]{\Sigma_{#1}}
\newcommand{\lam}[1]{\lambda #1 .}
\newcommand{\jdeq}{\equiv}      
\newcommand{\defeq}{\vcentcolon\equiv}  
\newcommand{\idfunc}[1][]{\ensuremath{\mathsf{id}_{#1}}\xspace}
\newcommand{\idtype}[3][]{\ensuremath{\mathsf{Id}_{#1}(#2,#3)}\xspace}
\newcommand{\refl}[1]{\ensuremath{\mathsf{refl}_{#1}}\xspace}
\newcommand{\trans}[2]{\ensuremath{{#1}_{*}\mathopen{}\left({#2}\right)\mathclose{}}\xspace}
\newcommand{\transf}[1]{\ensuremath{{#1}_{*}}\xspace} 
\newcommand{\UU}{\ensuremath{\mathcal{U}}\xspace}
\newcommand{\N}{\ensuremath{\mathbb{N}}\xspace}
\newcommand{\suc}{\mathsf{succ}}
\newcommand{\emptyt}{\ensuremath{\mathbf{0}}\xspace}
\newcommand{\unit}{\ensuremath{\mathbf{1}}\xspace}
\newcommand{\ttt}{\ensuremath{\star}\xspace}
\newcommand{\iscontr}{\ensuremath{\mathsf{isContr}}}
\newcommand{\isprop}{\ensuremath{\mathsf{isProp}}}
\newcommand{\isset}{\ensuremath{\mathsf{isSet}}}
\def\compare#1#2#3#4{\if#1#3\if#2#41\else0\fi\else0\fi}
\newcommand{\istype}[1]{
  \edef\a{\compare-2#1\empty\empty}
  \if\a1 \iscontr \else
  \edef\b{\compare-1#1\empty\empty}
  \if\b1 \isprop \else
  \edef\c{#1}
  \if0\c \isset \else
  \mathsf{is}\mbox{-}{#1}\mbox{-}\mathsf{type} \fi\fi\fi
}
\newcommand{\UIP}{\textsf{UIP}}
\newcommand{\MLTT}{\textsf{MLTT}}
\newcommand{\HOTT}{\textsf{HoTT}}
\newcommand{\MLTTK}{\ensuremath{\MLTT_K}}
\newcommand{\trunc}[2]{\mathopen{}\left\Vert #2\right\Vert_{#1}\mathclose{}}
\newcommand{\ordinal}[1]{[\mathsf{#1}]} 
\newcommand{\fst}{\mathsf{fst}}
\newcommand{\inl}{\mathsf{inl}\xspace}
\newcommand{\inr}{\mathsf{inr}\xspace}
\newcommand{\deltop}{\ensuremath{\Delta_+^\mathrm{op}}}
\newcommand{\deltplus}{\ensuremath{\Delta_+}}
\newcommand{\C}{\mathcal{C}}
\renewcommand{\lim}{\mathsf{lim}\xspace}
\newcommand{\oppo}[1]{#1^\mathrm{op}}
\newcommand{\zero}{\ensuremath{\mathbf{0}}\xspace}
\newcommand{\Fin}{\mathsf{Fin}}
\newcommand{\isIncr}{\mathsf{isIncr}}
\newcommand{\strict}[1]{#1^\mathrm{s}}
\newcommand{\fibrant}[1]{#1^\mathrm{f}}
\newcommand{\Ty}{\mathsf{Ty}}
\newcommand{\fTy}{\fibrant{\Ty}}
\newcommand{\Tm}{\mathsf{Tm}}
\newcommand{\fTm}{\fibrant{\Tm}}
\newcommand{\op}[1]{#1^{\mathrm{op}}}
\newcommand{\setcat}{\mathsf{Set}}
\newcommand{\presheaf}[1]{\widehat{#1}}
\theoremstyle{plain}
\newtheorem{theorem}{Theorem}[section]
\newtheorem{lemma}[theorem]{Lemma}
\theoremstyle{definition}
\newtheorem{remark}[theorem]{Remark}
\newtheorem{definition}[theorem]{Definition}
\title{Extending Homotopy Type Theory with Strict Equality}
\author{Thorsten Altenkirch \and Paolo Capriotti \and Nicolai Kraus}
\thanks{This work was supported by the Engineering and Physical Sciences Research Council (EPSRC), grant reference EP/M016994/1, and by USAF, Airforce office for scientific research, award FA9550-16-1-0029.}
\newcommand{\Type}{\textbf{Type}}
\begin{document}

\begin{abstract}
In homotopy type theory (HoTT), all constructions are necessarily stable under homotopy equivalence.  This has shortcomings: for example, it is believed that it is impossible to define a type of semi-simplicial types.  More generally, it is difficult and often impossible to handle towers of coherences.  To address this, we propose a 2-level theory which features both strict and weak equality.  This can essentially be represented as \emph{two} type theories: an ``outer'' one, containing a strict equality type former, and an ``inner'' one, which is some version of HoTT.  Our type theory is inspired by Voevodsky's suggestion of a \emph{homotopy type system} (HTS) which currently refers to a range of ideas.  A core insight of our proposal is that we do not need any form of equality reflection in order to achieve what HTS was suggested for.  Instead, having unique identity proofs in the outer type theory is sufficient, and it also has the meta-theoretical advantage of not breaking decidability of type checking.  The inner theory can be an easily justifiable extensions of HoTT, allowing the construction of ``infinite structures'' which are considered impossible in plain HoTT. Alternatively, we can set the inner theory to be exactly the current standard formulation of HoTT, in which case our system can be thought of as a type-theoretic framework for working with ``schematic'' definitions in HoTT.  As demonstrations, we define semi-simplicial types and formalise constructions of Reedy fibrant diagrams.
\end{abstract}
\maketitle

\section{Introduction: Motivations for a 2-Level System}

The identity type is probably the single concept of intensional Martin-L\"of type theory ($\MLTT$) which has created most questions, stimulated most research, caused most confusion, and enabled the largest number of different views.
Written $\idtype[A] x y$ for elements $x,y$ of a type $A$, the identity type expresses that two elements are equal in some sense and can be substituted for each other, and the elements of this type are called equalities.
However, by default, it has a somewhat strange standing.
On the one hand, it is not well-behaved when it comes to describing equality of functions and equality of types.
Given two functions of the same type, we cannot derive the principle of (naive) function extensionality, saying that the functions are equal if they are equal at every point, from the basic axioms.
We also cannot show that equivalent types can be substituted for each other, although they do behave equivalently in any given situation.
On the other hand, we also cannot derive the principle of unique identity proofs ($\UIP$): by a construction of Hofmann and Streicher~\cite{hofmannStreicher_groupoids}, we can not show $\idtype[{\idtype x y}] p q$ for two equalities $p,q$. A priori, it is unclear what it should mean to have distinguishable equalities and how one can make sense of this behaviour.

As we view it, there are two major ways to remedy the situation.
Both can be seen as extensions of $\MLTT$. The principle of function extensionality can be added in any case, but after that, we have two possibilities to extend the theory further.
The first is to add $\UIP$ (or, equivalently, Streicher's $K$) as an axiom. Let us call the resulting type theory $\MLTTK$.
The second possibility is to consider univalence, ensuring that type equality is what one would ideally expect.
This approach is taken by Homotopy Type Theory~\cite{HoTTbook}, and we write $\HOTT$ for the resulting theory.

One appeal of $\HOTT$ is that equalities can be seen as paths in a space, and it is even possibly to develop a lot of homotopy theory synthetically.
An important insight is that, when doing homotopy theory in type theory, every statement that we make is up to homotopy, and every construction respects (homotopical) equivalence. 
This means that whatever we do will be ``non-evil'' in the sense that it can only take the homotopy type of spaces, and homotopy equivalence classes of maps, into account, and not the concrete representations of spaces or maps.
Exactly this is often considered a selling point of $\HOTT$: one often defines constructions using representatives of homotopy classes in traditional homotopy theory and is forced to show that the constructions are well-defined, i.e.\ do not depend on the choice of the representative. 
In $\HOTT$, everything we do is automatically well-defined as we are simply not able to talk about strict properties internally.

Going back to the homotopical point of view, it is not hard to imagine that the blessing of having only constructions up to homotopy can turn out to be a curse:
we are unable to make any strict statement.
For example, we cannot form a type expressing that a given diagram commutes strictly; all we can do it stating that it commutes up to homotopy.
Unfortunately, depending on the shape of the diagram, this will only be sufficient in the simplest cases.
More often than not, it will be necessary to say that the different ``pieces'' (the equalities expressing commutativity) fit together. 
For example, the fact that a certain sub-diagram commutes can be part of the proof that the diagram commutes, but it may at the same time be derivable as the composition of the fact that other sub-diagrams commute.
In this case, it is natural to require these different ways of getting a certain proof to be equal.
It does not stop here; these new proofs can itself be required to be coherent, and so on.
What happens here is not at all something that can only be observed in type theory.
The first step becomes already apparent in the theory of monoidal categories in the form of ``Mac Lane's Pengaton''.
On higher dimensions, it is exactly the same issue that is discussed as \emph{homotopy commutativity versus homotopy coherence} by Lurie~\cite{Lurie_higherToposTheory}.

In general, homotopy coherence corresponds to infinite towers of coherence data, and it is a major open problem (and commonly believed to be unsolvable) to express such towers internally in $\HOTT$.
One way to avoid this problem is to restrict constructions to types of low truncation levels.
As an examples, the category theory developed in~\cite{ahrens_rezk} only considers $1$-truncated types and [what corresponds to] ordinary categories.
This is in many situations not satisfactory: we know that types are $\infty$-groupoids~\cite{lumsdaine_phd,bg:type-wkom}, and similarly, the universe should be an $(\infty,1)$-category. 
Unfortunately, we seem to have no way of expressing this internally in $\HOTT$.

The crucial shortcoming of $\HOTT$ is that we are unable to perform some constructions 
which actually \emph{seem} to be harmless as they only require finite amounts of coherences at every step.
An example that has received considerable attention in the $\HOTT$-community is the construction of Reedy fibrant $n$-truncated semi-simplicial types (simply referred to as \emph{semi-simplicial types}).
Let us start with $\deltplus$, the category of finite non-zero ordinals and strictly monotonous functions.
Let us write $\ordinal n$ for the ordinal with $(n+1)$ elements.
A type-valued diagram over $\deltop$ is a strict functor from $\deltop$ to the category of types.
It would correspond to a type $X_{\ordinal n}$ (for simplicity written $X_n$) for every $n$, and face maps $d_i : X_{n+1} \to X_{n}$ for $0 \leq i \leq n$, as it is well-known that any map can be written as a composition of face maps.
The problem is that we need the semi-simplicial identities (essentially a representation of the functor laws) to be strict, which we cannot express in type theory.
The considered approach to avoid this problem is to only attempt internalising Reedy fibrant diagrams over $\deltop$, essentially ensuring that the face maps are simple projections.
Using the correspondence between fibrations and type families, a (Reedy fibrant) semi-simplicial type then 
corresponds to a type $X_0$ (the ``points'') on level $0$. 
On level $1$, we need a family $X_1 : X_0 \to X_0 \to \UU$, where $\UU$ is the universe of types. 
We think of $X_1$ as lines between types.
Next, we need $X_2 : \prd{a,b,c: X_0} X_1(a,b) \to X_1(b,c) \to X_1(a,c) \to \UU$, the type of fillers for triangles. 
Writing down the type of $X_4$ is already a bit tedious, but nevertheless straightforward: $X_4$ is a family which gives a type for any collection of four points, six lines and four triangles that form an empty tetrahedron.
The long-standing open problem of homotopy type theory is to write down the type of $X_n$ in general (up to equivalence).
Perhaps surprisingly, this does not seem to be possible.
What \emph{is} possible is to generate an expression $X_n$ for every externally fixed numeral $n$, such that the expressions $X_0, X_1, X_2, \ldots$ all ``fit together''.
When one tries to do the construction for a \emph{variable} $n : \N$, it does no longer type-check.
The reason is that some judgmental equalities that hold in the case of a numeral $n$ fail to hold in the case of a variable.
We can try to prove in type theory that the required equalities hold up to homotopy.
However, we quickly have to realize that we then also need that these equalities are coherent, and that the coherence proofs are coherent themselves, and so on; something that no one has managed to do so far.
The problem is that we cannot formulate the tower of coherences that we need to prove. 
Morally, the required equalities \emph{should} hold and be fully coherent just because they are trivially satisfied for each externally fixed natural number.
If we can use a system where we judgmental equalities can be shown by induction, there would thus be no problem at all; however, this would require judgmental equalities to be some sort of type.

In $\MLTTK$, the internal equality type can be seen as an internalised version of judgmental equality.
For example, a well-known meta-theoretic statement is that any equality that is constructed in the empty context is $\refl{}$; that is, if we can show an equality internally without assumptions, then this equality holds judgmentally.
Not surprisingly, it is possible to construct Reedy fibrant semi-simplicial types in $\MLTTK$.
However, we can also simply define categories and functors in the naive sense, as all coherences are satisfied automatically.

The idea of a 2-level system is to combine $\MLTTK$ and $\HOTT$ instead of viewing them as two alternative extensions of $\MLTT$.
We can describe this in two ways:
\begin{enumerate}
 \item \label{item:firstKthenEmbed}
 Start with a type theory that has axiom $K$ and consider a ``sub-theory'' of types and maps that do not talk about equalities. 
 Inside this sub-system, we can consider a new equality type and univalent universes. 
 If we use the equality type with $K$ of the outer system, we cannot form types that live in the inner system; however, we can reason about the inner system.
 \item We may start with $\HOTT$ and try to formulate the meta-theory (in which judgmental equality lives) as a type system.
 It is not necessary to capture every aspect of the meta-theory in this type system; the important part is that this outer type system has an equality type (which we call \emph{strict equality}) satisfying $K$.
 We then have in total three equalities: the equality in $\HOTT$; the strict equality; and the judgmental equality (which we should now refer to as \emph{definitional equality}).
 From the point of view of $\HOTT$, the strict equality and definitional equality are identical.
\end{enumerate}

Considering a type theory with two equality types is not new.
Our proposal is motivated by the suggestion of 
a \emph{homotopy type system} (HTS) by Voevodsky~\cite{voe_hts}.
However, as far as we are aware, ``HTS'' mostly refers to a range of ideas so far 
but not to a precise theory, and there is no publication that presents or analyses HTS.
The core idea of HTS is to make some judgmental equalities provable.
In other words, some form of equality reflection, the characteristic concept of extensional type theory, is reintroduced, in a way that is compatible with the ``standard'' intensional identity type.
A concrete theory that could be called ``HTS'' is outlined in the draft~\cite{voe_hts} which, unfortunately, presents rather involved rules that would presumably be non-trivial to justify.
It is sometimes said that \emph{Andromeda}, a project by Bauer et al.~\cite{andromeda}, is an implementation of HTS.
However, this is probably rather misleading as Andromeda goes in a very different direction and does not feature univalence at all.

A key observation of the current paper is that no form of equality reflection is actually required.
Our proposal instead only required unique identity proofs for the strict equality type.
Thus, we can avoid all the problems that are usually connected to equality reflection, such as undecidability of type checking.
In contrast, the theory that we suggest is well-behaved, very close to the standard formulation of $\HOTT$, and has straightforward semantics.
One could expect that a downside of our system might be reduced expressibility compared to a theory that features equality reflection. 
However, we show that we can achieve in our system what HTS was suggested for: a definition of semi-simplicial types, and other constructions.
This should actually not be surprising in the light of Hofmann's result~\cite{hofmann_conservativity}, which states that equality reflection is conservative over $\MLTTK$.

Our 2-level theory can be defined as two separate type theories with a morphism between them. 
This actually gives a recipe for constructing a variety of reasonable 2-level theories, and the choices that can be made affect the exact abilities of the system.
We believe that our 2-level theories can be used in two ways.
First, we can use the outer theory as a powerful formal language to study the inner theory.
For some formulations of the 2-level theory, we get a conservativity result (by an argument of the second-named author; see the forthcoming thesis~\cite{paolo:thesis}).
This means that the inner theory is exactly $\HOTT$ as studied in the standard textbook on homotopy type theory~\cite{HoTTbook} and by many authors.
In a proof assistant which supports this theory, we can then implement results that so far can only be stated meta-theoretically.
To give an example, it is shown in~\cite{kraus_generaluniversalproperty} that constant functions from $A$ to $B$ which satisfy $n$ coherence conditions correspond to maps $\trunc {-1} A \to B$, provided that $B$ is $n$-truncated. 
This can be done in $\HOTT$ only if $n$ is an externally fixed natural number.
In the 2-level system, we can formalise it by taking $n$ to be a number in the outer theory, and show that the equivalence holds in the inner theory.

Second, we can use the construction of 2-level theories to derive extensions of $\HOTT$ that allow constructions that $\HOTT$ does not allow.
For example, we can assume that the natural numbers of the outer theory are \emph{exactly} the natural numbers of the inner theory, something that is satisfied in the simplicial set model.
This gives us a univalent type theory in which various concepts including semi-simplicial types can be defined.

\subparagraph*{Contributions of the paper}
Summarised, the main contributions of the paper are:
\begin{itemize}
 \item We give (for the first time) a clean presentation of a system with two equalities.
 \item Our theory is simple enough to have straightforward semantics. Such semantics have not yet been described for previous proposals~\cite{voe_hts,DBLP:journals/corr/PartL15}.
 \item We demonstrate how our theory allows constructions that are thought to be impossible in standard $\HOTT$, such as semi-simplicial types~\cite{herbelin_semisimpl,DBLP:journals/corr/PartL15}. A partial \textsf{Agda} formalisation is available.
 \item Schematic constructions, which could so far only be given on paper, can be formalised in our system. As an example, we perform constructions of Reedy fibrant diagrams. Further work is outlined in the conclusions.
\end{itemize}

\subparagraph*{Related work}
The current paper is the write-up of our presentation~\cite{altenkirch_twolevels} at TYPES'15.
As briefly explained above, the main difference to Voevosky's draft~\cite{voe_hts} is that we do not consider any form of equality reflection, saving us from various difficulties.

Superficially related is the construction by Maietti~\cite{Maietti2009319} of a \emph{two-level foundation for constructive mathematics}. However, their motivation and goals are very different from ours, hence their system cannot be used to reconcile strict equality with univalence.  

Our work is more closely related to a recent proposal by Part and Luo~\cite{DBLP:journals/corr/PartL15} of Logic-enriched $\HOTT$. 
In their system, our strict layer of type theory is replaced by a ``logic enrichment''.
Their proposal is limited to the construction of semi-simplicial types (corresponding to the one that we give in Section~\ref{sec:semisimp}). 
It is not explained whether this can be generalised to Reedy fibrant diagrams in the sense we present in Section~\ref{sec:diag-inverse}, as they use specific properties of the $\Delta_+$ category.

Herbelin has given a construction of semi-simplicial types along the lines of the one in Section~\ref{sec:semisimp} in an unspecified type theory containing a ``connective'' for strict equality~\cite{herbelin_semisimpl}.

\subparagraph*{\textsf{Agda} formalisation}
As a proof assistant based on our 2-level theory does not (yet) exist, we cannot formalise our constructions exactly as they are presented.
However, we have implemented in Agda an approximation of the construction of semi-simplicial types that is given in Section~\ref{sec:semisimp}.
It can be found on GitHub at \href{https://github.com/nicolaikraus/HoTT-Agda/blob/master/nicolai/SemiSimp/SStypes.agda}{\nolinkurl{github.com/nicolaikraus/HoTT-Agda/tree/master/nicolai/SemiSimp}}.
For an explanation of the relationship between the construction given in the paper and this implementation, we refer to the last remark of Section~\ref{sec:semisimp}.

\subparagraph*{Organisation}
The structure of the paper is as follows.
In Section~\ref{sec:specification}, we specify our 2-level theories.
Section~\ref{sec:semisimp} explains the construction of Reedy fibrant $n$-truncated semi-simplicial types in a way that could \emph{nearly} be done in homotopy type theory, and we show how the missing gap is filled by our strict equality.
Then, in Section~\ref{sec:diag-inverse}, we demonstrate how our theory can be used to internalise standard constructions in a fairly straightforward way. 
Finally, in 
Section~\ref{sec:conclusions}, we outline further work and conclude the paper.

\section{The Specification of a 2-Level System} \label{sec:specification}

In this section, we want to specify our 2-level theory (or, to be precise, our family of 2-level theories). 
We give two presentations: first, the semantical approach, and second, the syntactical approach.
With the first approach, we explain how the theory is constructed.
It also shows which choices can be made, and how models of the 2-level theory can be constructed.
The syntax that we propose afterwards is based on the semantics, but fixes a precise system.

\subsection{Semantical Approach}

Many models of type theory consist of a category $\C$, modelling the category of contexts.
Starting from $\C$, additional structures are added to model types and terms, together with the structure that is needed to model the components of the considered theory (such as universes or dependent functions).
Then, a model of a 2-level theory in our sense is given by a category of contexts $\C$, together with \emph{two} structures on $\C$ such that the first structure (taken together with $\C$) models $\HOTT$, and the second structure (taken together with $\C$) models $\MLTTK$.
Finally, we need a morphism between the structures in a suitable sense, describing how any type or term in $\HOTT$ can be viewed as a type or term in $\MLTTK$. 

We make this precise using the notion of \emph{categories with families}~\cite{dybjer1995internal}.  Let us recall the definition:

\begin{definition}[CwF {\cite{dybjer1995internal}}]
A \emph{category with families} (CwF) is given by:
\begin{itemize}
\item a category $\C$, equipped with a distinguished terminal object $1$;
\item a presheaf $\Ty : \C \to \op{\setcat}$;
\item a presheaf $\Tm : \int \Ty \to \op{\setcat}$;
\item for all $\Gamma : \C$ and $A : \Ty(\Gamma)$, an object $(\Gamma.A, \pi_A)
  : \C/\Gamma$ representing the functor $\C/\Gamma \to \op{\setcat}$ defined by:
$$
(\Delta, \sigma) \mapsto \Tm_\Delta(A[\sigma]).
$$
\end{itemize}

The objects of $\C$ are called \emph{contexts}. Given a context $\Gamma$, the
elements of $\Ty(\Gamma)$ are called \emph{types}, and given a type $A$, the
elements of $\Tm_\Gamma(A)$ are called \emph{terms}.

The context $\Gamma.A$ is called the \emph{context extension} of $\Gamma$ by the
type $A$, and $\pi_A$ is the \emph{display map} of $A$.

The action of $\Ty$ and $\Tm$ on morphisms is called \emph{substitution}.
\end{definition}

A CwF can be regarded as a model of $\MLTT$ with only
\emph{structural rules}, i.e.\ rules that deal with types, terms and
substitutions, but no type formers (like $\Pi$ or $\Sigma$ types).  Type formers
can be postulated separately as additional structures on a CwF.
For details, we refer to \cite{dybjer1995internal} and \cite{hofmann_syntaxSemantics}.

To model a 2-level type theory, we need to add some extra structure to a CwF:
\begin{definition}\label{def:cwf2}
A \emph{2-level category with families} is a CwF $\C$, together with:
\begin{itemize}
\item a presheaf $\fTy : \C \to \op{\setcat}$;
\item a natural transformation $|-|: \fTy \to \Ty$.
\end{itemize}
\end{definition}

Given a 2-level CwF $\C$, we will denote the underlying category with family by
$\strict{\C}$.  There is also a second CwF structure on $\C$, where the types
are given by $\fTy$, and terms are defined as:
$$
\fTm_{\Gamma}(A) = \Tm_{\Gamma}(| A |),
$$
and context extension is given simply by $\Gamma.A = \Gamma.| A |$.  We will
denote this second CwF by $\fibrant{\C}$.

The map $|-|$ determines a morphism of CwF $\fibrant{\C} \to \strict{\C}$,
which we will also denote by $|-|$.

The theory employed in this paper will be modelled by a 2-level CwF $\C$ where:
\begin{itemize}
\item $\strict{C}$ is a model of $\MLTTK$;
\item $\fibrant{C}$ is a model of $\HOTT$;
\item the morphism $|-|$ preserves $\Pi$, $\Sigma$ and $1$ \emph{strictly}.
\end{itemize}

Note that, crucially, equality types, although present in both CwF structures,
are \emph{not} generally preserved.  This is important, because preservation of
equality would mean that axiom $K$ holds in $\fibrant{\C}$, which in turn would
imply that $\fibrant{\C}$ does not admit any univalent universes containing
non-propositional types.

Other type formers besides those mentioned might or might not be preserved.  We
say that a 2-level CwF is \emph{strong} if $|-|$ preserves coproducts, natural numbers, and the empty type (more generally $W$-types, if part of the theory). 

Interestingly, most of the existing models of $\HOTT$ can be naturally extended
to a 2-level CwF.  Most notably, the simplicial model
\cite{kapLumsVoe_ssModelOfUF} can be regarded as a 2-level CwF, where $\Ty$ is
given by arbitrary (well-ordered) morphisms, $\fTy$ is the subfunctor of $\Ty$
consisting of Kan fibrations, and $|-|$ is simply the inclusion.  With this
setup, $\strict{C}$ is (equivalent to) a presheaf CwF, which models type theory
with equality reflection (hence, in particular, $\MLTTK$), and $\fibrant{C}$ is
the same as the model defined in the paper.

One can also start with an arbitrary model $\C$ of $\HOTT$, then consider the
presheaf category $\presheaf{\C}$.  It is perhaps not surprising that one can
equip $\presheaf{\C}$ with a 2-level CwF structure so that $\C$ can be recovered
inside $\fibrant{\presheaf{C}}$.  This makes it possible to use 2-level type
theory to formulate and prove statements that hold in any model of $\HOTT$,
i.e.\ 2-level type theory can be regarded as a meta-language for $\HOTT$.

However, the details of this construction are somewhat involved, mainly due to
the strictness requirement in Definition~\ref{def:cwf2}.  Therefore, we will
not explore that direction further in this paper and refer instead to the
forthcoming thesis of the second-named author~\cite{paolo:thesis}.

\newcommand{\steq}{\stackrel{s}{=}}
\newcommand{\ctx}{\mathrm{ctx}}

\subsection{Syntactical Approach}

In the syntactical approach, the clear separation of a 2-level theory into two theories becomes harder to see.
We do not leave as many choices open as in the semantical approach, but rather fix a concrete theory; and the choices that we make ensure that the conservativity result of the forthcoming thesis~\cite{paolo:thesis} applies to the presented theory.

For a precise specification, we choose a presentation in the style of~\cite[Appendix A.2]{HoTTbook}, which considers three forms of judgments: $\Gamma \vdash \ctx$; $\Gamma \vdash a : A$; and $\Gamma \vdash a \jdeq a' : A$.
Fortunately, we do not need to give \emph{all} the rules, as most of them are identical to those given in~\cite[Appendix A.2]{HoTTbook}. 
Thus, in most cases, it is sufficient to state the difference in order to give both an understandable and a precise specification.

\newcommand{\strictN}{\N^s}

The theory that we consider has the following basic types and type formers: 
$\Pi$, the type former of dependent functions;
$\Sigma$, the type former of dependent pairs;
$+$, the coproduct type former;
$\unit$, the unit type;
$\emptyt$, the empty type;
$\N$, the fibrant type of natural numbers;
$=$, the equality type (in the sense of $\HOTT$);
a hierarchy $\UU_0, \UU_1, \ldots$ of universes.
So far, we can think of these as the types and type formers of $\HOTT$.
Further, we have:
$+^s$, the strict coproduct;
$\emptyt^s$, the strict empty pretype;
$\strictN$, the strict pretype of natural numbers;
$\steq$, the strict equality;
and hierarchy $\UU_0^s, \UU_1^s, \ldots$ of strict universes.

Both the hierarchy $\UU_0, \UU_1, \ldots$ and the hierarchy $\UU_0^s, \UU_1^s, \ldots$ are cumulative.
We think of the elements of $\UU_i$ as \emph{fibrant types} (or simply \emph{types}), while the elements of $\UU_i^s$ are \emph{pretypes}.

Recall possibility \ref{item:firstKthenEmbed} from the two ways of describing a 2-level system as outlined on page \pageref{item:firstKthenEmbed}: we can start with a type theory with $K$ and embed $\HOTT$ later.
Thus, we first consider the type theory with the basic types $\emptyt^s$, $\unit$, $\strictN$, with universes $\UU_0^s, \UU_1^s, \ldots$, and with $+^s$, $\Pi$, and $\Sigma$.
All rules correspond exactly to those of~\cite[Appendix A.2]{HoTTbook}.
For example:
\begin{itemize}
 \item Contexts are formed using elements of $\UU_i^s$, i.e.\ if $\Gamma$ is a context and $\Gamma \vdash A : \UU_i^s$, then $\Gamma.A$ is a context.
 \item If $\Gamma \vdash A : \UU_i^s$ and $\Gamma.A \vdash B : \UU_i^s$, then we have $\Gamma \vdash \prd A B : \UU_i^s$.
 \item If $\Gamma,x:A \vdash b : B$, then we have $\Gamma \vdash \lam x b : \prd A B$.
 \item All further rules of $\Pi$, and all rules of $\Sigma$, $+^s$, $\emptyt^s$, $\unit$, and $\strictN$ are also those given in~\cite[A.2.4--9]{HoTTbook}. 
 The constructors of $+^s$ are called $\inl^s$, $\inr^s$, and the constructors of $\strictN$ are called $\zero^s$ and $\suc^s$.
 We assume all the usual judgmental rules (including the judgmental $\eta$-rule for $\Sigma$).
\end{itemize}
Further, the theory has a strict identity pretype, written $\steq$: For any $\Gamma \vdash A : \UU_i^s$ and $\Gamma \vdash a_1,a_2 : A$, we have $\Gamma \vdash a_1 \steq a_2 : \UU_i^s$, with the introduction rule $\refl{}^s$, the eliminator $J^s$, and the usual computation rule. 
For pretypes $A,B : \UU_i^s$, we can form the pretype of strict isomorphisms, written $A \simeq^s B$ (unlike in $\HOTT$, it is enough to have maps in both directions such that both compositions are pointwise strictly equal to the identity).
However, we do \emph{not} assume that $\UU_i^s$ is univalent. 
Instead, we add the rule $K^s$: for $A$, $a_1$, $a_2$ as before, and for $\Gamma \vdash p,q : a_1 \steq a_2$, we have a term $\Gamma \vdash K^s(p,q) : p \steq q$. 
We also assume that $\steq$ satisfies the principle of function extensionality.

Note that, so far, we have not considered $\UU_i$, $+$, $\emptyt$, $\N$, $=$ at all.
We do this now, and their rules are more subtle.
The first important rule is that any type (element of $\UU_i$) is also a pretype (element of $\UU_i^s$):
\begin{equation}\label{eq:type-is-pretype}
 \infer[]{\Gamma \vdash A : \UU_i^s}{\Gamma \vdash A : \UU_i}
\end{equation}
This means that informally we can understand $\UU_i$ as a subtype of $\UU_i^s$.

Now, let $A$ and $B$ be fibrant types, i.e.\ $\Gamma \vdash A : \UU_i$ and $\Gamma.A \vdash B : \UU_i$.
Then, by~\eqref{eq:type-is-pretype} and by the formation rule of $\Pi$, we have $\Gamma \vdash \prd A B : \UU_i^s$.
However, we add the rule that, under these conditions, this conclusion can be lifted to $\Gamma \vdash \prd A B : \UU_i$. 
In other words, $\Pi$ preserves types.
We add the same rule for $\Sigma$:
\begin{equation}
 \infer[]{\Gamma \vdash \smsimple A B : \UU_i}{\Gamma \vdash A : \UU_i & \Gamma.A \vdash B : \UU_i}
\end{equation}
We do \emph{not} add the same rule for $+^s$, that is, the strict sum of two types is still only a pretype.
Similarly, there is no special rule for $\steq$: if $\Gamma \vdash a_1,a_2 : A$, it does not matter whether $A$ is a type or only a pretype, the expression $a_1 \steq a_2$ is only an element of $\UU_i^s$, not of $\UU_i$.

In contrast, the equality type former $=$ can only be applied to elements of fibrant types; i.e.\ its formation rule is given by
\begin{equation}
 \infer[]{\Gamma \vdash a_1 = a_2 : \UU_i}{\Gamma \vdash A : \UU_i & \Gamma \vdash a_1,a_2 : A}
\end{equation}
(note that there is no strict universe $\UU_i^s$ involved).
The type $a_1 = a_2$ (with the constructor $\refl{}$) is a pretype by rule~\eqref{eq:type-is-pretype}, but (usually) not the same as $a_1 \steq a_2$.
The elimination principle of $=$ only works for families of types (not in general for pretypes).
This means that the usual ``path induction'' principle, which allows us to construct an element of $\prd{a_1,a_2 : A}\prd{p:a_1 = a_2}P(a_1,a_2,p)$, can only be applied if $P$ is a family of types, i.e.\ $\Gamma \vdash P : (\smsimple{a_1,a_2:A}a_1=a_2) \to \UU_i$.
If we restrict ourselves to types, we can do everything that we can do in $\HOTT$. 
In particular, we can say what it means for a function between types to be an equivalences (using $=$).
We assume that the universes $\UU_0, \UU_1, \ldots$ are univalent, that is, the canonical map from type of equalities $A = B$ to the type of equivalences $A \simeq B$ (defined as usual in homotopy type theory) is an equivalence itself.

Similarly, the type former $+$ only allows us to form a type $A+B$ if $A$ and $B$ are types (elements of some $\UU_i$), and we can only defined a function $\prd{x:A+B}P(x)$ with the usual induction principle if $P$ is a family of types.

We have the type of natural numbers $\N : \UU_0$ (in any context) with the constructors $\zero$, $\suc$, and its induction principle can only be applied to eliminate into families of types.
The same is the case for $\emptyt$.
This completes the syntactical characterization of our 2-level system.
We will usually omit the index and simply write $\UU^s$ or $\UU$ instead of $\UU_i^s$ or $\UU_i$ in the same style as it is done in~\cite{HoTTbook}.
A strong 2-level theory is now simply one in which $\emptyt^s$ and $\emptyt$, and $+^s$ and $+$, and $\strictN$ and $\N$ coincide.

\begin{remark}
 If $A$ is a (``fibrant'') type with elements $a_1,a_2 : A$, then we can form both the type $a_1 = a_2$ and the pretype $a_1 \steq a_2$. 
 By ``strict path induction'' (i.e.\ an application of $J^s$), we can easily construct a function $a_1 \steq a_2 \to a_1 = a_2$.
 Consequently, strictly equal elements of a type are also homotopy-equal.
 This corresponds to the fact that judgmental equality in $\HOTT$ implies equality (``$\refl{}$'').
 We cannot construct a function in the other direction, as the path induction principle $J$ can only be applied to eliminate into types, which $a_1 \steq a_2$ is not. 
 Hence, equal elements are not necessarily strictly equal.
 However, if we have a type which does satisfy this ``equality reflection'' principle, it is easy to see that the type is a set in the sense of homotopy type theory.
\end{remark}

\newcommand{\sst}{\mathsf{SST}}
\newcommand{\ssx}{\mathsf{SS}}
\newcommand{\ssk}{\mathsf{SK}}
\newcommand{\sskmor}{\mathsf{SK}^\to}
\newcommand{\ssfunclaw}{\mathsf{\alpha}}

\section{Semi-Simplicial Types} \label{sec:semisimp}

In a 2-level theory, we can define strict categories in a reasonable sense. 
There are a number of choices that one can make; 
for example, the objects could be a fibrant type or only assumed to be a pretype.
Later (see Definition~\ref{def:strictcat}), we will give one possible concrete definition.
The important thing is that the categorical equations can be required strictly; and, if we have such a strict category $\mathcal C$, we can easily write down the pretype of strict functors $\mathcal C \to \UU$.
Unfortunately, there is no general way to get an actual fibrant type of such functors.

The case where $\mathcal C$ is $\Delta^\mathrm{op}$ (the category of finite nonempty ordinals and increasing functions) is particularly interesting since ``simplicial structures'' appear frequently in homotopy theory.
Having a type of functors $\Delta^\mathrm{op} \to \UU$ would have many potential applications; 
maybe most notably, one could try to internalise a constructive version of the model of univalent foundations in simplicial sets~\cite{kapLumsVoe_ssModelOfUF}.
Unfortunately, it seems unreasonable to expect that such a type can be constructed.
It would be a good approximation (and potentially good enough for many constructions) if one could form a type of functors $\deltop \to \UU$, where $\deltplus$ is the category of finite nonempty ordinals and \emph{strictly} increasing functions (a more precise definition will be given below).
Trying to define such a type seems more promising, since $\deltop$ is an \emph{inverse category} and, if we restrict ourselves to \emph{Reedy fibrant} functors, we can describe them by induction (see~\cite{shulman_inversediagrams}).

This gave rise to the challenge of defining Reedy fibrant $n$-truncated semi-simplicial types (in the community often just referred to as \emph{semi-simplicial types}) in type theory.
The challenge was first raised during the special year on Univalent Foundations at the Institute for Advanced Study (Princeton, 2012--13).
As briefly sketched in the introduction, a (Reedy fibrant) $0$-truncated semi-simplicial type is simply a type $X_0 : \UU$, a $1$-truncated semi-simplicial type is such an $X_0$ together with a family $X_1 : X_0 \to X_0 \to \UU$, and so on.
Defining $n$-truncated semi-simplicial types as a family $\sst : \N \to \UU$ in homotopy type theory is a famous open problem.
Many attempts (see e.g.\ \cite{herbelin_semisimpl,shulman:eating,nicolai:thesis}) 
have not led to a solution, and at a workshop on HoTT in Warsaw (June 29--30, 2015), a clear majority of the participants expected it to be impossible.

The hard part of the construction is to define the \emph{matching objects} $M_n$, that is the ``full boundary'' of an $n$-simplex, as the corresponding component of $\sst_n$ is then just given as a family $M_n \to \UU$.
A popular attempt for defining the matching objects $M_n$ is to define the \emph{$k$-skeleton} $\ssk_n^k$ of $\sst_n$ by induction on $k$, that is, the collection of components of $\sst_n$ up to level $k$.
As long as $k$ is a fixed numeral, this can be done.
However, if $k$ is a variable, some crucial judgmental equalities do not hold anymore
and the construction is believed to become impossible.
In our $2$-level theory, we can prove strict equalities (i.e.\ the internalisation of judgmental equality) by induction.
This allows us to complete the sketched approach of defining $\sst$ in a weak sense:
we construct a family $\sst : \strictN \to \UU$. 
If we assume that the strict natural numbers ($\strictN$) and the fibrant ones ($\N$) coincide, this represents a construction of $n$-truncated semi-simplicial types.
Without this assumption and under the conjectured conservativity result~\cite{paolo:thesis},
it internalises the result that $n$-truncated semi-simplicial types can be defined for an externally fixed $n$.

To give the precise construction, let us first note that we have the family $\Fin : \strictN \to \UU$ of finite types ($\Fin_n$ is the type with $n$ elements), together with the families $<^n : \Fin_n \to \Fin_n \to \UU$.
Let us write $\isIncr_{i,j}$ for the predicate 
 \begin{align*}
  &\isIncr_{i,j} : (\Fin_i \to \Fin_j) \to \UU \\
  &\isIncr_{i,j} (f) \defeq \prd{x,y : \Fin_i} (x <^i y) \to (f(x) <^j f(y)),
 \end{align*}
expressing that a function is strictly monotonously increasing.
Let us further write $\deltplus(i,j)$ for the type $\sm{f : \Fin_{i} \to \Fin_{j}} \isIncr_{i,j}(f)$.
We then have a composition operator $\circ : \deltplus(h,i) \to \deltplus(i,j) \to \deltplus(h,j)$, defined separately on each of the two components.
This is a representation of strictly increasing functions such that $\circ$ is strictly associative, as observed in~\cite{nicolai:thesis}.
Unsurprisingly, this is enough to make $\deltplus$ a \emph{category} in the sense that we will define later (see Definition~\ref{def:strictcat}).\footnote{Note that, for technical reasons, we include the initial object $\Fin_0$. This explains the shift by $1$: we have defined $\deltplus(i,j) \defeq \sm{f : \Fin_{i} \to \Fin_{j}} \isIncr_{i,j}(f)$ instead of $\sm{f : \Fin_{i+1} \to \Fin_{j+1}} \isIncr_{i,j}(f)$.}

In the following, we use variable names $\vec X$, $\vec x$ instead of $X$, $x$ to indicate that we have an element of a nested $\Sigma$-type, i.e.\ a tuple.
With $\deltplus$ at hand, we define truncated semi-simplicial types ($\sst$) simultaneously with skeletons ($\ssk$) and the morphism part of skeletons (written $\sskmor$).
These have the following types:
\begin{equation*}
 \begin{alignedat}{3}
  & \sst && : \strictN \to \UU && \hspace*{-2cm} \text{--- we write $\sst_k$ instead of $\sst(k)$;} \\
  & \ssk && : \prd{k : \strictN} \; \sst_k \to \strictN \to \UU && \hspace*{-2cm} \text{--- we write $\ssk_{k,\vec X}^{n}$ instead of $\ssk(k,\vec X,n)$;} \\
  & \sskmor && : \prd{k : \strictN} \prd{\vec X : \sst_k} \prd{m,n : \strictN} \prd{f : \deltplus(m,n)} \ssk_{k,\vec X}^n \to \ssk_{k,\vec X}^m \\
  &&&&& \hspace*{-2cm} \text{--- we write $\sskmor_{k,\vec X}$ instead of $\sskmor(k,\vec X,m,n)$.}
 \end{alignedat}
\end{equation*}
These type families can be explained as follows:
\begin{enumerate}
 \item $\sst_k$ is the type of $(k-1)$-truncated semi-simplicial types.
 \item Assume we have a $(k-1)$-truncated semi-simplicial type $\vec X$, where $k$ is smaller than another given number $n$.\footnote{The definition works for $k \geq n$, but $k < n$ is the case that is important for the intuition.} 
$\vec X$ allows us to form the type $\ssk_{k,\vec X}^n$.
This is the type of ``partial boundaries'' of an $(n-1)$-truncated semi-simplicial type.
Intuitively, it has $n$ points, $\binom n 2$ lines, \ldots, and $\binom n k$ cells on level $(k-1)$.
 \item We think of $\ssk_{k,\vec X}$ as a ``functor'' from $\deltplus$ to $\UU$. 
Its morphism component is given by $\sskmor_{k,\vec X}$: for any $f : \deltplus(m,n)$, we get a function $\ssk_{k,\vec X}^n \to \ssk_{k,\vec X}^m$ which simply ``removes'' those cells that appear in the partial boundary of an $(n-1)$-simplex, but not in the partial boundary of an $(m-1)$-simplex.
\end{enumerate}
At the same time as we define $\sst$, $\ssk$ and $\sskmor$, we prove the following strict functor law for all $k, l, m, n : \strictN$, $\vec X : \sst_k$, and $f : \deltplus(l,m)$, $g : \deltplus(m,n)$:
\begin{equation*}
 \ssfunclaw_k (\vec X,f,g) : \sskmor_{k, \vec X} g \circ \sskmor_{k,\vec X} f \; \steq \; \sskmor_{k,\vec X}(g \circ f).
\end{equation*}

We define all the components by induction on $k$ as follows.
In the base case, we set
$\sst_0 \defeq \unit$; $\ssk_{0,\ttt}^n \defeq \unit$; $ \sskmor_{0,\ttt}f \defeq  \idfunc [\unit]$; and $\ssfunclaw_0(\ttt,f,g) \defeq \refl{}^s$.
In the successor case, we choose
\begin{equation*}
 \begin{alignedat}{2}
  & \sst_{k+1} && \defeq \sm{\vec X : \sst_k} \ssk_{k,\vec X}^{k+1} \to \UU \\
  & \ssk_{k+1,(\vec X,Y)}^n && \defeq \sm{\vec x : \ssk_{k,\vec X}^n}  \prd{f : \deltplus(k+1,n)}  Y\left(\sskmor_{k,\vec X}(f, \vec x)\right) \\
  & \sskmor_{k+1,(\vec X,Y)}(f , (\vec x , h)) && \defeq  \left( \sskmor_{k,\vec X} (f,\vec x) \; , \; \lam g \trans {\ssfunclaw_k} {h(f\circ g)} \right) \label{eq:sskmor}\\
 \end{alignedat}
\end{equation*}
Note that, in the last line, the type of the term $h(f \circ g)$ is $Y(\sskmor_{k, \vec X}(g \circ f , \vec x))$.
However, what we need at that point is an element of the type $Y(\sskmor_{k, \vec X}(g, \sskmor_{k, \vec X}(f,\vec x)))$.
This is why we transport along the proof $\ssfunclaw_k(\vec X , f,g)$, abbreviated to $\ssfunclaw_k$, which shows that the two types are strictly equal.

We omit the term for $\ssfunclaw_{k+1}$ as it is not insightful to write it down explicitly.
It is constructed as follows.
First, we note that we need a (strict) equality between pairs; the first components are (strictly) equal by $\ssfunclaw_k$.
When one tries to prove that the second components are (strictly) equal, one quickly realizes that what is needed is coherence for the family of strict equalities $\ssfunclaw_k$:
The composition $\sskmor_{k,\vec X} g \circ \sskmor_{k,\vec X} f \circ \sskmor_{k,\vec X} e$ can be shown to be (strictly) equal to $\sskmor_{k,\vec X} (g \circ f \circ e)$ in two ways, and we need that both ways are strictly equal.
Of course, this follows from the fact that we have axiom $K$ for our strict equality.
We have verified this construction in Agda (see the remark below). 

\begin{remark}
 We and many others have attempted to formalize semi-simplicial types in homotopy type theory with exactly the outlined strategy, replacing the strict law $\ssfunclaw$ by the usual equality type.
 This works in the same way until the point where we need that $\ssfunclaw_k$ is coherent, which is automatic in our case.
 Intuitively, $\ssfunclaw_k$ \emph{is} coherent, and it is easy to get trapped into thinking that this coherence can just be shown simultaneously with the other four components.
 However, if one does this, one notices that one needs an additional coherence level for $\ssfunclaw_{k-1}$,
 and it continues like this.
 Morally, all these coherences should hold, and it is very likely that we would actually be able to prove them inductively if only we were able to write them down.
 Unfortunately, writing them down is a problem that is very similar to defining semi-simplicial types itself.
 From this point of view, what the $2$-level theory gives us is the possibility to prove a certain equality and \emph{all} its higher coherences at the same time. 
\end{remark}
\begin{remark}
 We have now defined the $n$-truncated semi-simplicial types $\sst : \N_s \to \UU$, so we may ask whether we can define a (non-truncated) type of semi-simplicial types $\ssx : \UU$.
 If we work in the strong 2-level theory (where $\N_s$ and $\N$ coincide), 
 we can consider the homotopy limit $\ssx : \UU$, defined as
  \begin{align*}
    \ssx & \defeq \sm{f : \prd{n:\N} \sst_n} \prd{i : \N} \fst(f(i + 1)) = f(i)
  \end{align*}
 Then, $\ssx$ is indeed a (fibrant) type that encodes (Reedy fibrant) functors $\deltop \to \UU$.
\end{remark}
\begin{remark}
 Our Agda formalisation\footnote{\href{https://github.com/nicolaikraus/HoTT-Agda/blob/master/nicolai/SemiSimp/SStypes.agda}{\nolinkurl{github.com/nicolaikraus/HoTT-Agda/tree/master/nicolai/SemiSimp}}} takes place within the fibrant theory. 
 The contribution of the strict equality is completely encapsulated in a single lemma that we postulate without a formal proof.
 Unfortunately, simulating our 2-level system completely in Agda, although possible in principle, would be extremely cumbersome because of the need to keep track of type fibrancy manually. 
\end{remark}

\section{Reedy Fibrant Diagrams Over Inverse Categories} \label{sec:diag-inverse}

In Section~\ref{sec:semisimp}, we have defined Reedy fibrant truncated semi-simplicial types using our 2-level theory.
We have stayed in the fibrant theory ($\HOTT$) as much as we could, and only used the strict theory to prove a crucially needed coherence. 
In this section we want to demonstrate that the 2-level theory is even more powerful if we give up this strategy of only working in the fibrant fragment whenever possible.
The point is that we can derive results about $\HOTT$ without staying inside $\HOTT$, 
analogous to how one can get results that respect homotopy equivalence even when certain constructions are performed on concrete spaces that only represent homotopy types.

What we claim is that, in a proof assistant implementing a 2-level type theory, we could formalize many constructions that are presented meta-theoretically in the current literature.
In the current section, we will show that Reedy fibrant diagrams $I \to \UU$ have limits in $\UU$ if $I$ is a finite inverse category. This is an internalised version of results discussed by Shulman~\cite{shulman_inversediagrams}.
Of course it generalizes the construction in Section~\ref{sec:semisimp}, although not ``literally'': the truncated semi-simplicial types that we get here will look different from those constructed above.

\subsection{Essentially Fibrant Pretypes and Strict Fibrations}

As a preparation for our ``more abstract'' sample applications of the 2-level theory, we remark that it is often not necessary to know that a pretype $A : \UU^s$ is a fibrant type. 
Instead, it is usually sufficient to have a fibrant type $B : \UU$ and a strict isomorphism $A \simeq^s B$. 
If this is the case, we say that $A$ is \emph{essentially fibrant}. 
Clearly, every fibrant type is also an essentially fibrant pretype.

In Section~\ref{sec:semisimp}, we have made heavy use of the fibrant finite types $\Fin_n$ (for $n : \N$).
In a strong 2-level theory, this type coincides with the strict pretype $\Fin_n^s : \UU^s$ (for $n : \strictN$), but this is not in general the case.
We say that some pretype $I$ is \emph{essentially finite} if we have a number $n : \strictN$ and a strict isomorphism $I \simeq^s \Fin_n^s$.
\begin{lemma}
 Let $I$ be essentially finite and $X : I \to \UU$ be a family of fibrant types. Then, $\prd{i:I} X(i)$ is essentially fibrant.
\end{lemma}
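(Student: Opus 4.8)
The plan is to use the essential finiteness of $I$ to reduce the claim to a statement about products over the strict finite pretypes $\Fin_n^s$, and then to prove that statement by induction on $n$, exploiting the fact that a $\Sigma$ of fibrant types is again fibrant. The basic observation is that essential fibrancy is invariant under strict isomorphism: the relation $\simeq^s$ is reflexive, symmetric, and transitive (one composes strict isomorphisms, the inverse laws of the composite being assembled from the component laws, with all resulting coherences trivialised by $K^s$ and function extensionality for $\steq$), so any pretype strictly isomorphic to an essentially fibrant one is itself essentially fibrant.

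First I would transfer the index along the isomorphism. Write the witness of essential finiteness as maps $\phi : I \to \Fin_n^s$ and $\psi : \Fin_n^s \to I$ with pointwise strict inverse laws $p_i : \psi(\phi(i)) \steq i$ and $q_j : \phi(\psi(j)) \steq j$, and set $Y \defeq X \functioncompose \psi : \Fin_n^s \to \UU$. I claim $\prd{i:I} X(i) \simeq^s \prd{j:\Fin_n^s} Y(j)$. The forward map sends $s$ to $\lam j s(\psi(j))$; the backward map sends $t$ to $\lam i \trans{p_i}{t(\phi(i))}$, where $t(\phi(i)) : X(\psi(\phi(i)))$ is transported via $J^s$ (eliminating into the pretype family $X$) along $p_i$. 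Verifying that the two round-trips are pointwise strictly equal to the identity is the one genuinely strict step: it requires reconciling the transports $p_i$ with the witnesses $q_j$, which collapses because $K^s$ makes every pretype of strict equalities a strict proposition, and is then packaged up using function extensionality for $\steq$.

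It then suffices to prove, by induction on $n : \strictN$, that for every family $Y : \Fin_n^s \to \UU$ of fibrant types the pretype $\prd{j:\Fin_n^s} Y(j)$ is essentially fibrant. In the base case $\Fin_0^s \simeq^s \emptyt^s$, and $\prd{j:\emptyt^s} Y(j) \simeq^s \unit$ (any two such functions agree by $\emptyt^s$-elimination and $\steq$-function extensionality), with $\unit$ fibrant. In the successor case I use the standard strict isomorphism $\Fin_{m+1}^s \simeq^s \unit +^s \Fin_m^s$ together with
\[
 \prd{j : \unit +^s \Fin_m^s} Y(j) \;\simeq^s\; Y(\inl^s \ttt) \times \Big(\prd{k:\Fin_m^s} Y(\inr^s k)\Big),
\]
which follows from the induction principle and computation rules of $+^s$ (again with $\steq$-function extensionality for one round-trip). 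By the induction hypothesis the second factor is essentially fibrant, say strictly isomorphic to some $B : \UU$; since $Y(\inl^s \ttt)$ is fibrant and $\simeq^s$ is a congruence for $\times$, the right-hand side is strictly isomorphic to $Y(\inl^s \ttt) \times B$. Finally $Y(\inl^s \ttt) \times B$ is the non-dependent $\Sigma$ of two fibrant types, hence fibrant by the $\Sigma$-preservation rule, and the whole product is essentially fibrant.

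The main obstacle is the reindexing isomorphism of the second paragraph: it is the only place where the index mismatch $\psi(\phi(i)) \steq i$ must be absorbed by transport, and it is precisely here that we rely on $K^s$. Without unique identity proofs for $\steq$, the round-trip coherences would spawn an unbounded tower of higher conditions — exactly the phenomenon that motivates the two-level setup — whereas $K^s$ discharges them all at once. Everything else is a routine unfolding of a finite product into iterated fibrant $\Sigma$-types.
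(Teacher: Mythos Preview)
Your proof is correct and follows essentially the same approach as the paper: induction on the cardinality $n$, with the product over an empty index reducing to $\unit$ and the successor step peeling off one factor via the decomposition $\Fin_{m+1}^s \simeq^s \unit +^s \Fin_m^s$, then invoking closure of fibrant types under $\Sigma$. The only organisational difference is that you front-load the reindexing $I \simeq^s \Fin_n^s$ once and then induct over $\Fin_n^s$, whereas the paper keeps $I$ abstract and at each step writes $I \simeq^s \unit +^s I'$ with $I'$ essentially finite of smaller cardinality; your version is more explicit about where $K^s$ enters (the coherence of the reindexing transports), which the paper leaves implicit.
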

\begin{proof}
 Essential finiteness gives us a cardinality $n$ on which we do induction. If $n$ is $\zero^s$, then $\prd{i:I} X(i)$ is strictly isomorphic to the unit type. 
 Otherwise, we have an essentially finite $I'$ such that $f : \unit +^s I' \simeq^s I$, and $\prd{i : I} X(i)$ is strictly isomorphic to
$X(f(\mathsf{inl}\ \ttt)) \times \prd{i : I'} X(f(\mathsf{inr}\ i))$,
 which is essentially finite by the induction hypothesis.
\end{proof}

Similar to essential fibrancy, we have the following definition:

\begin{definition}[strict fibration]
Let $p : E \to B$ be a function (with $E, B : \UU^s$).  
We say that $p$ is a \emph{strict fibration} if we have a family $F : B \to \UU$ such that
the fibre of $p$ over any $b : B$ is strictly isomorphic to $F(b)$, that is,
$\prd{b:B} \left(F(b) \simeq^s \sm{e:E}p(e) \steq b \right)$. 
\end{definition}

From now on, we will drop the attribute \emph{strict} and simply talk about \emph{fibrations}.
Any fibrant type family $F : B \to \UU$ gives rise to a fibration $p : E \to B$, 
as it is easy to see that the first projection $(\smsimple B F) \to B$ satisfies the given condition.
Indeed, any strict fibration is isomorphic over $B$ to a strict fibration of this form.
This often allows us to assume that a given fibration has the form of a projection.

\subsection{Strict Categories}

We define categories in much the same way as the precategories are defined in~\cite{HoTTbook}, except that we
use strict equality to express the laws. 
Since strict equality does not suffer from coherence issues, this notion of category is well-behaved. 
It can be applied to structures which do not have fibrant types of objects or morphisms.

\renewcommand{\C}{\mathcal C}
\let\seq\steq 
\newcommand{\obj}[1]{\vert #1 \vert}

\begin{definition}[strict category] \label{def:strictcat}
A \emph{strict category} $\C$ is given by: a pretype $\obj \C : \UU^s$ of \emph{objects};
for all pairs $x, y : \obj \C$, a pretype $C(x, y) : \UU^s$ of \emph{arrows}
or \emph{morphisms};
an \emph{identity} arrow 
$\mathsf{id} : \C(x, x)$ for every object $x$; and a \emph{composition} function
$\circ : \C(y, z) \to \C(x, y) \to \C(x, z)$ for all objects $x,y,z$.
The usual categorical laws are required to hold strictly, that is, 
we have strict equalities $f \circ \mathsf{id} \steq f$ and $\mathsf{id} \circ f \steq f$, as well as $h \circ (g \circ f) \steq (h \circ g) \circ f)$.

We say that a category is \emph{essentially finite} if the pretype of objects $\obj \C$ is essentially finite (no condition is put on the arrows).
\end{definition}

The usual theory of categories can be reproduced in the context of strict
categories.  We leave it to the reader to define appropriate notions of \emph{functor},
\emph{natural transformation}, \emph{limits}, \emph{adjunctions}, and so on.

From now on, we will refer to strict categories simply as \emph{categories}. If
$\C$ is a category, we will often abuse notation and use $\C$ itself to denote
its type of objects.

Another important notion is the following:
\begin{definition}[reduced coslice]
Given a category $\C$ and an object $x : \C$, the \emph{reduced coslice}
$x \sslash \C$ is the full subcategory of non-identity arrows in the coslice
category $x \slash \C$. 
A concrete definition is the following.
The objects of $x \sslash \C$ are triples 
of an $y : \obj \C$, a morphism $f : \C(x,y)$, and a proof $\neg \left(\trans p
f \steq \mathsf{id} \right)$, for all $p : x \steq y$,
where $\transf p$ denotes the $\mathsf{transport}$ function $\C(x,y) \to \C(y,y)$.
Morphisms between $(y,f,s)$ and $(y',f',s')$ are elements $h : \C(y,y')$ such that $h \circ f \steq f'$ in $\C$.

Note that we have a ``forgetful functor'' $\mathsf{forget} : x \sslash \C \to \C$, given by the first projection on objects as well as on morphisms.
\end{definition}

\subsection{Inverse Categories}

Classically, \emph{inverse categories} are categories which do not contain an infinite sequence of nonidentity arrows (see~\cite{shulman_inversediagrams}).
We restrict ourselves to those which have \emph{height} at most $\omega$, and where a \emph{rank function} is given explicitly.
First, consider the category $\oppo{{\strictN}}$ which has $n : \strictN$ as objects, and $\oppo{{\strictN}}(n,m) \defeq n >^s m$ (the function $>^s : \strictN \to \strictN \to \UU^s$ is defined in the canonical way).
Then, we define:
\begin{definition}[inverse category]
 We say that a category $\C$ is an \emph{inverse category}
 if there is a functor $\varphi : \C \to \oppo{{\strictN}}$ which reflects identities; i.e.\ if we have $f : \C(x,y)$ and $\varphi_x \steq \varphi_y$, then we also have $p : x \steq y$ and $\trans p f \steq \mathsf{id}$.
\end{definition}

\subsection{Reedy Fibrant Limits}

Much of what is known about the category of sets in classical category theory
can be extended to the category of pretypes in a given universe. 
For example, the following result translates rather directly:

\begin{lemma} \label{lem:all-strict-limits}
The universe $\UU^s$, viewed as a category in the canonical sense, has all small limits.
\end{lemma}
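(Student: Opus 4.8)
The plan is to construct all limits explicitly, exploiting that $\UU^s$ is closed under $\Pi$, $\Sigma$ and $\steq$, and that the strict theory supplies both function extensionality and axiom $K^s$. I regard $\UU^s$ as a strict category in the canonical way: objects are pretypes, $\Hom(A,B) \defeq A \to B$, with the evident identities and composition, which are strictly unital and associative. A small diagram is then a functor $D : J \to \UU^s$ out of a small strict category $J$ (its pretype of objects and all hom-pretypes lying in $\UU^s$); for an object $j : J$ I write $D_j : \UU^s$, and for $\phi : J(j,k)$ I write $D\phi : D_j \to D_k$.

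Given such a $D$, I would define the limit by the standard formula
\begin{equation*}
  \lim D \defeq \sm{x : \prd{j : J} D_j} \prd{j,k : J} \prd{\phi : J(j,k)} (D\phi)(x_j) \steq x_k .
\end{equation*}
Since $\Pi$, $\Sigma$ and $\steq$ all land in $\UU^s$, and since $\Pi$ and $\Sigma$ do not raise the universe level, $\lim D$ is again a pretype in the same universe $\UU^s$ whenever $J$ is small, so there is no size problem. The limiting cone consists of the projections $\pi_j \defeq \lambda (x,p).\, x_j : \lim D \to D_j$; the second component $p$ of an element of $\lim D$ witnesses exactly that these projections commute with the diagram, i.e.\ that $(D\phi) \circ \pi_j \steq \pi_k$ for every $\phi$.

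For the universal property, let $C : \UU^s$ together with a cone $(\alpha_j : C \to D_j)_{j}$ be given, with witnesses $s_{j,k,\phi} : \prd{c:C} (D\phi)(\alpha_j(c)) \steq \alpha_k(c)$. I would define the mediating map $u : C \to \lim D$ by $u(c) \defeq (\lambda j.\,\alpha_j(c),\; \lambda j\,k\,\phi.\, s_{j,k,\phi}(c))$; by construction $\pi_j \circ u \steq \alpha_j$. The crux is uniqueness. Suppose $u' : C \to \lim D$ also satisfies $\pi_j \circ u' \steq \alpha_j$. By function extensionality it suffices to show $u(c) \steq u'(c)$ for each $c$; the first components agree because both equal $\lambda j.\,\alpha_j(c)$, and once the first components coincide the two second components are elements of one and the same $\Pi$ of strict-equality pretypes, hence strictly equal by $K^s$ together with funext. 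The $\eta$-rule for $\Sigma$ then yields $u(c) \steq u'(c)$.

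The main obstacle, and the only place where the strict fragment is genuinely used, is this uniqueness step: it is precisely axiom $K^s$ that collapses the proof-relevant second component and makes $\lim D$ a $1$-categorical limit rather than merely a homotopy limit. Equivalently, one could factor the argument through the existence of all small products, given by $\prd{j:J} D_j$, together with all equalizers, given by $\sm{a:A} f(a) \steq g(a)$ for parallel $f,g : A \to B$; in the equalizer the same appeal to $K^s$ guarantees uniqueness of the mediating map. Notably, no univalence and no fibrancy is required anywhere, which is exactly why the classical proof for the category of sets transfers essentially verbatim.
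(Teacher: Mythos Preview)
Your proof is correct and follows essentially the same approach as the paper: the paper defines the limit as the pretype of natural transformations $\mathsf{1} \to X$ from the constant functor on $\unit$, which unwinds to exactly your explicit $\Sigma\Pi$-formula, and then leaves the universal property as a ``routine verification'' that you have spelled out in full (correctly invoking $K^s$ and function extensionality for uniqueness). Your added remark about factoring through products and equalizers is a standard equivalent route, not a genuine departure.
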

\begin{proof}
Let $\C$ be a category with $\obj \C : \UU^s$ and $\C(x,y) : \UU^s$ (for all $x,y$).
Let $X : \C \to \UU^s$ be a
functor. 
We define $L$ to be the pretype of natural transformations $\mathsf 1 \to X$, where 
$\mathsf 1 : \C \to \Type$ is the constant functor on $\unit$.  
Clearly, $L : \UU^s$, and
a routine verification shows that $L$ satisfies the universal property of the
limit of $X$.
\end{proof}

Unfortunately, the category $\UU$ of fibrant types is not as well behaved.  
Even pullbacks of fibrant types are not fibrant in general (but see Lemma~\ref{prop:fibrant-pullback}). 
If we have a functor $X : \C \to \UU$, we can always regard it as a functor $X : \C \to \UU^s$, where it does have a limit.
If this limit happens to be essentially fibrant, we say that $X$ has a \emph{fibrant limit}.
Clearly, this limit will then be a limit of the original diagram $C \to \UU$ (note that $\UU$ is a full subcategory of $\UU^s$).

\begin{lemma}\label{prop:fibrant-pullback}
The pullback of a fibration $E \to B$ along any function $f : A \to B$ is a fibration.
\end{lemma}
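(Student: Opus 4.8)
The plan is to guess the witnessing fibrant family on $A$ directly and then verify the fibre condition by a singleton contraction. Let $p : E \to B$ be a fibration, witnessed by a family $F : B \to \UU$ together with strict isomorphisms $\prd{b:B}\left(F(b) \simeq^s \sm{e:E}p(e) \steq b\right)$, and let $f : A \to B$ be arbitrary. By Lemma~\ref{lem:all-strict-limits} the pullback exists in $\UU^s$, and it is (strictly isomorphic to) the pretype $P \defeq \sm{a:A}\sm{e:E}(f(a) \steq p(e))$ equipped with the projection $q : P \to A$ sending $(a,e,h)$ to $a$. I claim that $q$ is a fibration witnessed by the family $G \defeq F \circ f : A \to \UU$, which is fibrant because it is the composite of $f$ with the fibrant family $F$.

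It remains to produce, for each $a : A$, a strict isomorphism $G(a) \simeq^s \sm{w:P}(q(w) \steq a)$. First I would unfold the fibre: $\sm{w:P}(q(w) \steq a)$ is $\sm{a':A}\sm{e:E}\sm{h : f(a') \steq p(e)}(a' \steq a)$. Reordering the $\Sigma$-components (a strict isomorphism, using the strict $\eta$-rule) puts the subexpression $\sm{a':A}(a' \steq a)$ outermost; contracting this singleton by strict path induction ($J^s$) — i.e. substituting $a' \defeq a$ — yields a strict isomorphism to $\sm{e:E}(f(a) \steq p(e))$, which is exactly the fibre of $p$ over $f(a)$. By the fibration hypothesis for $p$, this fibre is strictly isomorphic to $F(f(a)) = G(a)$. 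Composing the two strict isomorphisms gives the required witness, so $q$ is a fibration.

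Everything here is phrased with strict equality, so no higher coherence conditions intervene and all intermediate maps are genuine strict isomorphisms; since strict isomorphisms compose (the notion asks only for maps in both directions whose composites are pointwise strictly equal to the identity), the composite is again one. The only mildly delicate point is the singleton contraction, which relies on $J^s$ and would fail in this generality for the weak equality $=$. A cleaner route, if one prefers, is to invoke the observation recorded just after the definition of strict fibration: every fibration is isomorphic over its base to a projection $\sm{b:B}F(b) \to B$, and both \emph{being a fibration} and \emph{pullback along $f$} are invariant under isomorphism over the base. Assuming $p$ is such a projection, the pullback computes immediately to $\sm{a:A}F(f(a))$, which is manifestly the projection associated with the fibrant family $F \circ f$. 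Either way, the main obstacle is purely the bookkeeping of reassociating the iterated $\Sigma$-types, which is routine given the strict $\eta$-rule.
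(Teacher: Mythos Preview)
Your proposal is correct. The ``cleaner route'' you describe at the end is exactly the paper's proof: assume without loss of generality that $E \jdeq \smsimple{b:B} F(b)$ with $p$ the first projection, and observe that $\smsimple{a:A} F(f(a))$ with its first projection is the pullback and manifestly a fibration. Your first, more explicit argument via singleton contraction is also fine and reaches the same conclusion without invoking the WLOG reduction; it just unpacks what that reduction is doing.
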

\begin{proof}
We can assume that $E$ is of the form $\sm{b:B} C(b)$ and $p$ is the first projection.
Clearly, the first projection of $\sm{a:A}C(f(a))$
satisfies the universal property of the pullback.
\end{proof}

Lemma \ref{prop:fibrant-pullback} makes it possible to construct fibrant limits of
certain ``well-behaved'' functors from inverse categories.
The so-called \emph{matching objects} play an important role.
\begin{definition}[matching object; see {\cite[Chp.~11]{shulman_inversediagrams}}]
Let $\C$ be an inverse category, and $X : {\C} \to \UU$ a functor. 
For any $z : \C$, we define the \emph{matching object} $M_z^X$ to be the (not necessarily fibrant) limit of
the composition
$z \sslash \C \xrightarrow{\mathsf{forget}} \C \xrightarrow{X} \UU \subset \UU^s$.
\end{definition}

\begin{definition}[Reedy fibrant diagram; see {\cite[Def. 11.3]{shulman_inversediagrams}}]
 Let $\C$ be an inverse category and $X : \C \to \UU$ be a functor.
 We say that $X$ is \emph{Reedy fibrant} if, for all $z : \C$, the canonical map $X_z \to M_z^X$ is a fibration.
\end{definition}

Using this definition, we can make precise the claim that we can construct fibrant limits of certain well-behaved diagrams:
\begin{theorem}[see {\cite[Lemma 11.8]{shulman_inversediagrams}}] \label{thm:fibrant-limits}
Let $\C$ be an essentially finite inverse category. 
Then, every Reedy fibrant $X : \C \to \UU$ has a fibrant limit.
\end{theorem}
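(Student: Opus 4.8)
The plan is to argue by induction on the cardinality $n : \strictN$ witnessing essential finiteness of $\C$, keeping the limit available as a pretype throughout (via Lemma~\ref{lem:all-strict-limits}) and checking at each stage only that it is essentially fibrant. Write $L_\C$ for the limit of $X : \C \to \UU$ computed in $\UU^s$, i.e.\ the pretype of natural transformations $\mathsf 1 \to X$. In the base case $n$ is $\zero^s$, the category is empty, and $L_\C \simeq^s \unit$, which is fibrant.

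For the inductive step I would first use the strict isomorphism $\obj{\C} \simeq^s \Fin_n^s$ (with $n$ a successor) to single out an object $z$ of maximal rank, i.e.\ one maximising $\varphi_z : \strictN$. Since $\varphi$ reflects identities, every non-identity morphism strictly decreases rank; maximality of $z$ therefore forces the only morphism into $z$ to be $\mathsf{id}$, while every non-identity morphism out of $z$ lands in an object of strictly smaller rank. Let $\C'$ be the full subcategory on the remaining objects. Restricting $\varphi$ keeps $\C'$ an inverse category, deleting one element of $\Fin_n^s$ makes it essentially finite of cardinality the predecessor of $n$, and — because $z$ is never the target of a non-identity arrow — the reduced coslices $y \sslash \C'$ for $y : \C'$ coincide with $y \sslash \C$. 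Hence the restriction of $X$ to $\C'$ is again Reedy fibrant and the inductive hypothesis applies, giving that $L_{\C'}$ is essentially fibrant.

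The heart of the argument is to identify $L_\C$ as a pullback. Unfolding the definition of a natural transformation and using that the only constraints on the component $s_z$ at $z$ come from the non-identity arrows out of $z$ — all of which target $\C'$ — I would establish a strict isomorphism
\[
 L_\C \;\simeq^s\; L_{\C'} \times_{M_z^X} X_z,
\]
where $X_z \to M_z^X$ is the canonical map and the map $L_{\C'} \to M_z^X$ is restriction of a cone along $z \sslash \C \xrightarrow{\mathsf{forget}} \C'$ (legitimate because $z \sslash \C$ is a subcategory of $\C'$, so its limit $M_z^X$ is read off from the data of $L_{\C'}$). The strict $\eta$-rule for $\Sigma$ and the computation rule for $\steq$ are what make this a genuine strict isomorphism rather than a mere equivalence.

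Finally I would assemble the pieces. Reedy fibrancy says $X_z \to M_z^X$ is a fibration, so by Lemma~\ref{prop:fibrant-pullback} the projection $L_\C \to L_{\C'}$ obtained from the pullback is a fibration. It then remains to observe that the total pretype of a fibration whose base is essentially fibrant is itself essentially fibrant: a fibration carries a fibrant family $F : B \to \UU$ with $E \simeq^s \sm{b:B} F(b)$, and composing $F$ with the inverse of $B \simeq^s B'$ (for some $B' : \UU$) yields a fibrant family over a fibrant base, whose $\Sigma$ is fibrant since $\Sigma$ preserves fibrancy. Taking $B = L_{\C'}$ shows $L_\C$ is essentially fibrant, closing the induction. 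The main obstacle is the pullback identification: one must arrange the bookkeeping of essential finiteness so that \emph{removing a maximal object} is a bona fide operation on the pretype of objects, and then verify \emph{strictly} that the compatibility condition on the component at $z$ is exactly the matching-object condition. Everything downstream is a mechanical application of the two fibrancy lemmas.
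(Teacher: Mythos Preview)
Your proposal is correct and follows the same route as the paper: induct on the cardinality, remove an object $z$ of maximal rank, identify $L_{\C}$ as the pullback $L_{\C'} \times_{M_z^X} X_z$, and conclude via Lemma~\ref{prop:fibrant-pullback} together with the inductive hypothesis. Your account is in fact slightly more explicit than the paper's in justifying why the restriction to $\C'$ remains Reedy fibrant and why the domain of a fibration over an essentially fibrant base is essentially fibrant.
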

\begin{proof}
By induction on the cardinality of $\C$. In the zero case, the limit is the unit type.

 Otherwise, let us consider the rank functor $\varphi : \C \to \oppo{{\strictN}}$.
 We choose an object $z : \C$ such that $\varphi_z$ is maximal; this is possible (constructively) since $\C$ is assumed to be essentially finite.
 Let us call $\C'$ the category that we get if we remove $z$ from $\C$; 
 that is, we set $\obj {\C'} \defeq \sm{x : \obj \C} \neg (x \steq z)$.
 Clearly, $\C'$ is still essentially finite and inverse.
 
 Let $X : \C \to \UU$ be Reedy fibrant. 
 We can write down the limit of $X$ 
 explicitly as
 \begin{equation}
  \sm{c : \prd{y : \obj \C} X_y} \prd{y,y' : \obj \C} \prd{f : \C(y,y')} Xf(c_y) \steq c_{y'}.
 \end{equation}
 Using that $z$ has no incoming non-identity arrows, this pretype is strictly isomorphic to
 \begin{equation} \label{eq:limit-2}
 \begin{alignedat}{1}
  \sm{c_z : X_z} &\sm{c : \prd{y : \obj {\C'}} X_y} \\
  &\phantom{\Sigma} \left(\prd{y : \obj {\C'}} \prd{f : \C(z,y)} Xf(c_z) \steq c_y\right) \times \left(\prd{y,y' : \obj {\C'}} \prd{f : \C(y,y')} Xf(c_y) \steq c_{y'}\right).
 \end{alignedat}
 \end{equation}

 Let us write $L$ for the limit of $X$ restricted to $\C'$,
 and let us further write $p$ for the canonical map $p : L \to M_z^X$. 
 Further, we write $q$ for the map $X_z \to M_z^X$.
 Then, \eqref{eq:limit-2} is strictly isomorphic to
 \begin{equation}\label{eq:limit-3}
  \sm{c_z : X_z} \sm{d : L} p(d) \steq q(c_z).
 \end{equation}
 This is the pullback of the span $L \xrightarrow{p} M_z^X \xleftarrow{q} X_z$.
 By Reedy fibrancy of $X$, the map $q$ is a fibration. 
 Thus, by Lemma~\ref{prop:fibrant-pullback}, the map from \eqref{eq:limit-3} to $L$ is a fibration.

 By the induction hypothesis, $L$ is essentially fibrant.
 This implies that \eqref{eq:limit-3} is essentially fibrant, as it is the domain of a fibration whose codomain is essentially fibrant.
\end{proof}

If $\C$ is an inverse category, we will denote by $\C^{<n}$ the full subcategory
of $\C$ consisting of all those objects of rank less than $n$.  Correspondingly,
for a given diagram $X$ over $\C$, we will denote by $X|n$ the restriction of
$X$ to $\C^{<n}$.

\newcommand{\tdeltop}[1]{\left(\op{\Delta_+}\right)^{<#1}}

\subsection{Fibrant Limits and Semi-Simplicial Types}

If $X$ is a Reedy fibrant diagram over $\C \defeq \tdeltop n$, we can restrict $X$ to $n
\sslash \C$, then take the limit of the corresponding functor.  With a slight
abuse of notation, we will denote such limit by $M_n^X$, even though $X$ is not
defined at $n$.

Note that a diagram $X$ over $\tdeltop{n+1}$ is Reedy fibrant if and only if its
restriction to $\tdeltop n$ is Reedy fibrant and the map $X_n \to M_n^X$ is a
fibration.  Hence, to give a Reedy fibrant diagram over $\tdeltop{n+1}$ is the
same as to give a Reedy fibrant diagram $X$ over $\tdeltop{n}$, together with a
fibration $Y$ over $M_n^X$.  We will refer to this extended diagram as $\langle
X, Y \rangle$.
By mutual induction on the natural number $n$, we can define a type $\sst_n$,
and a function $\ssk_n$ from $\sst_n$ to diagrams over $\tdeltop n$.  We start
with with $\sst_0 \defeq \unit$ and $\ssk_0(\ttt)$ set to the trivial diagram
over $\tdeltop 0$.
Then, we set
\begin{equation*}
\sst_{n+1} \defeq \sm{X : \sst_n} (M_{n}^{\ssk_{n} X} \to \UU)
\hspace*{1.2cm} \text{and} \hspace*{1.2cm}
\ssk_{n+1}(X, Y) \defeq \langle X, Y \rangle.
\end{equation*}
Above, we write $M_n^A$ to mean the type given by Theorem
\ref{thm:fibrant-limits} which is strictly isomorphic to the matching object of
$A$ at $n$ (which would otherwise only be a pretype).

This gives us a succinct alternative to the construction of Section
\ref{sec:semisimp}, where most of the hard work is encapsulated in the use of
Theorem \ref{thm:fibrant-limits}.

\section{Conclusions and Further Work} \label{sec:conclusions}

In the previous two sections, we have demonstrated how our 2-level theories can be used in two ways. 
First, our framework offers reasonable, easily justifiable ways of extending homotopy type theory.
Second, we can internalise results about homotopy type theory that, before, could only be stated meta-theoretically. 
In a suitable proof assistant which implements a 2-level theory, we could formalize many constructions that can at the moment only be done on paper. 
Our current article offers a demonstration of this possibility: we have shown
how some of the constructions about fibrant limits and diagrams can be
internalised.
From here, we could go into several directions.
We could, for example, internalise Shulman's result that diagrams over a model of type theory form again a model, preserving univalence~\cite{shulman_inversediagrams}.
Of course, for such an internalisation, we need to be careful to formulate all definitions and results constructively.

A more modest but (as we believe) worthwhile next goal is the construction of fibrant replacements.
With this, we can internalise the proof that any type carries the structure of an $\infty$-groupoid (a Kan semi-simplicial type), as it is given in~\cite[Remark and Corollary 16]{kraus_generaluniversalproperty}.
To do this, we would first define an $\infty$-groupoid to be a Reedy fibrant semi-simplicial type $X : \deltop \to \UU$ such that every fibration from $X_n$ to a horn is an equivalence (in the sense of homotopy type theory).
We can then, for a type $A : \UU$, consider the semi-simplicial type $\mathsf{Eq}_A$, defined to be the Reedy fibrant replacement of the functor that is constantly $A$.
It is shown in~\cite{kraus_generaluniversalproperty} that $\mathsf{Eq}_A$ is an $\infty$-groupoid in our sense, and the argument can easily be internalised. 
This construction is in fact not difficult and has in the current paper been omitted solely for reasons of space.

Our next significant project, supported by the 2-level theory, 
is the development of $(\infty,1)$-category theory.
By an $(\infty,1)$-category, which could also be called a \emph{Segal type}, we mean a Reedy fibrant semi-simplicial type $X$ for which the usual ``Segal maps'' $X_n \to X_1 \times_{X_0} \ldots \times_{X_0} X_1$ are equivalences.
It is likely that it is necessary to add degeneracies, and we expect that this can be done in the way presented by Harpaz~\cite{harpaz2015quasi}.

We believe that it is important to develop a theory of $(\infty,1)$-categories
type-theoretically, because the universe itself should be an
$(\infty,1)$-category; we expect that many infinite coherence problems become
approachable if we can set up some basic infrastructure, so that towers of
coherences could be formulated and handled in a clean way.

The most important application that we currently have in mind is the specification of \emph{higher inductive types} (HITs).
Although HITs are used frequently in the literature on homotopy type theory, we do not have a general syntactical specification yet.
The approach to define a general syntactical framework of HITs that is used in~\cite{gabe_HITs} seems to be promising, 
but suffers from the issue that an unmanageable number of coherences
needs to be handled manually.
We expect and hope that this can be resolved with the framework of $(\infty,1)$-categories that we plan to develop.

\bibliographystyle{alpha}
\bibliography{master}

\end{document}